\documentclass{sig-alternate-2013}

%
%

\usepackage{calc}
\usepackage{amstext}
\usepackage{multicol}
\usepackage{pslatex}
\usepackage[font=small,labelfont=bf,justification=justified,width=.45\textwidth]{caption}
\usepackage{subcaption}

\usepackage{pgfplots}

\begin{filecontents}{data_sig.dat}
1,4
2,8
3,16
4,32
5,64
6,128
7,256
8,512
\end{filecontents}

\usepackage[utf8]{inputenc}


\usepackage{times}
\usepackage{graphicx}


\usepackage{amsmath,amsfonts,mathrsfs,amssymb,mathtools}
\usepackage{stmaryrd}
\usepackage{color}
\usepackage{url}

\usepackage{multirow}




\usepackage{algorithm,algorithmicx,float}
\usepackage[noend]{algpseudocode}
\DeclareCaptionType{copyrightbox}





  \newdef{definition}{Definition}
  \newtheorem{theorem}{Theorem}


\newsavebox{\fmbox}

\makeatletter
\let\@copyrightspace\relax
\makeatother

\begin{document}

\title{On the Incremental Asymmetric Signatures}

\numberofauthors{1}
\author{
\alignauthor
Kevin Atighehchi\\
       \affaddr{Aix-Marseille University, CNRS, Centrale Marseille, ERISCS, I2M, UMR 7373, 13453 Marseille, France}\\
       \email{kevin.atighehchi@univ-amu.fr}
}

%
%
%
%
%
%
%

\maketitle


\begin{abstract}
The purpose of incremental cryptography is to allow the updating of cryptographic forms of documents undergoing modifications, 
more efficiently than if we had to recompute them from scratch.
This paper defines a framework for using securely a variant of the incremental hash function designed by Bok-Min Goi et \emph{al}. 
The condition of use of their hash function is somehow impractical since they assume that the blocks of the message are all distinct. 
In this paper we show how we can discard this strong assumption so as to construct 
the first practical incremental asymmetric signature scheme that keeps efficient update operations.
Finally, as the proposed scheme has the defect to severely expand the signature size, we propose a solution which drastically reduces this drawback.
\end{abstract}

\keywords{Incremental cryptography, Obliviousness, Parallel cryptography}






\section{Introduction}

\begin{sloppypar}
Incremental cryptography, introduced by Bellare, Goldreich and Goldwasser
in \cite{Bellare94incrementalcryptography:,Bellare95incrementalcryptography,Bellare97anew}, 
is used to maintain up-to-date at low computational cost the outputs of cryptographic algorithms.
If one has the signature of the current version of a file, it is preferable to avoid to recompute 
from scratch the signature algorithm applied to the entire file whenever a modification is performed. Such a low computational
efficiency for update operations finds application in various situations, for example when we want to maintain constantly 
changing authenticated databases, editable documents or even when we have to sign similar letters addressed to different recipients.
Nowadays, an area of application of greatest interest is its use in cloud storage systems, trends to outsource personal data make
computational efficiency as well as space efficiency particularly important.
Besides, an other interest of incremental cryptographic schemes is their inherent parallelism which allow performances to fit accordingly 
by using appropriate multi-processing platforms.
\end{sloppypar}

Typical operations supported by incremental cryptography are the \emph{replace}, \emph{insert} and \emph{delete} operations. 
Authentication schemes based on static hash tree can support only both replacement and appending of a block in $\mathcal{O}(\log n)$ where 
$n$ is the number of blocks of the message. 
Such schemes are used to authenticate fixed-size memory \cite{DBLP:journals/tcos/ElbazCGLPT09}. Those based on a dynamic hash tree (or skip list) 
\cite{Bellare95incrementalcryptography,Micciancio97obliviousdata,Goodrich01efficientauthenticated}  can support additionally the insertion/deletion of a block 
in $\mathcal{O}(\log n)$ and can be used for the authentication of variable-sized data, such as files.
The running time of an insert operation can be proportional with the amount of changes made in $D$ to obtain $D'$
if they are based on weakly related data structures such as pair-wise chainings. This is the case for the PCIHF (for ``Pair Chaining \& Modular Arithmetic Combining Incremental Hash Function'') 
hash function defined in \cite{DBLP:conf/indocrypt/GoiSC01,Computational_complexity_PCIHF} or for one of the MAC (Message Authentication Code) algorithms defined 
in \cite{Bellare95incrementalcryptography}. 
Note that randomized self-balanced data structures (such as the signatures based on Oblivious Trees \cite{Micciancio97obliviousdata} or the 
authenticated dictionary \cite{Goodrich01efficientauthenticated} based on non-deterministic skip lists) and pair-wise chaining ensure obliviousness at each modification 
on the signature (resp. tag), that is, the updated signature (resp. tag) is indistinguishable from an original one.

The problem of pair-wise chaining is that input blocks are weakly related and by consequent the PCIHF hash function is subject to trivial collisions 
\cite{pairchaining,pair-wise-chaining-hash}. 
The use of the \emph{randomize-then-combine} paradigm in hash functions raises security issues due to the combining operator used.
By simply performing a Gaussian elimination to exhibit linear dependencies, Bellare et \emph{al} \cite{Bellare97anew} have proved that the XOR combining 
operator is also subject to simple collisions when used in un-keyed hash function. 
This one is discarded in the PCIHF hash function in favour of a modular arithmetic operator to avoid this second type of 
attack. Unfortunately, the use of such an operator remains delicate to recent sub-exponential attacks \cite{Wagner02ageneralized,Ber07}.

Bok-Min Goi et \emph{al} \cite{DBLP:conf/indocrypt/GoiSC01} have designed an efficient 
incremental hash function based on the \emph{randomize-then-combine} paradigm \cite{Bellare97anew}.
Unfortunately, their construction is based on the impractical assumption that the blocks
in the message to hash are all distinct. Besides, due to recent attacks described in the Generalized Birthday Problem \cite{Wagner02ageneralized,Ber07},
the security parameters of the PCIHF hash function are no longer adapted.
We show in this paper how one can discard this strong assumption about the blocks of a message to construct 
the first practical incremental asymmetric signature scheme that keeps update operations within a linear runtime 
worst case complexity. Since the first incremental signature scheme that we describe produces signatures of size 
twice the length of the messages, we propose a parametrizable version which reduces this overhead with the counterpart 
of less time efficient updates.
Finally, we show that we can reduce the security of our schemes to the security of the underlying primitives 
involved, namely a set-collision resistant hash function, a traditional signature scheme and a simple combinatorial problem.

The paper is organized as follows: in the next section, 
we discuss pair-wise chaining and give some useful definitions.
In Section \ref{proposed_scheme}, we introduce our incremental signature scheme for which an improvement is given in Section \ref{improvment}. 
Proofs of security are given in Section \ref{security} and 
efficiency of our solution is discussed in Section \ref{efficiency}.
Finally we conclude in Section \ref{conclude}.


\section{Preliminaries}\label{prel}

Cryptographic schemes take as input a document $D$ which is divided into a sequence of fixed-size blocks $D_1,D_2, \ldots,D_n$. 
Documents are then viewed as strings over an alphabet $\sum = \{0, 1\}^{b}$ where $b$ is the block-size in bits.

\subsection{Operations on documents} 

\begin{sloppypar}
We denote by $\mathcal{M}$ the space of modifications
and define the main modification operations $M \in \mathcal{M}$ allowed on a document as follows:
($i$) $M=(delete,i,j)$ deletes data from the block $i$ to $j$ (included);
($ii$) $M=(insert,i,\sigma)$ inserts $\sigma$ between the $i^{\textrm{th}}$ and $(i+1)^{\textrm{th}}$ block, where $\sigma$ represents data of size a multiple of $b$;
($iii$) $M=(replace,i, \sigma)$ changes to $\sigma$ the data starting from block $i$ to block $(i+k-1)$ (included), where $\sigma$ is of size $kb$.

The resulting document after a modification operation $M$ is denoted $D \langle M \rangle$. Other frequent modifications are the $cut$ and $paste$ operations but, 
for the sake of simplicity, we do not deal with such composite modifications here.
\end{sloppypar}

\subsection{Hash functions based on pair-block chaining}

\begin{sloppypar}
\textit{Pair-wise chaining}\footnote{The term \textit{pair-block chaining} is also encountered.} roughly describes the following process: given the $n$-block string $D$, 
each block (except the last one) is paired up with the subsequent one, a pseudo-random function is then evaluated at the resulting point. We obtain in this way a sequence
of $n-1$ values which characterizes a relation between the blocks of the document. The following paragraph gives a detailed description.
\end{sloppypar}

\begin{sloppypar}
\paragraph*{\textbf{The PCIHF hash function}} This is the first really incremental hash function \cite{DBLP:conf/indocrypt/GoiSC01} implementing 
the \emph{randomize-then-combine} paradigm introduced by Bellare et \emph{al} \cite{Bellare97anew}, as this one supports \emph{insert} and \emph{delete} operations. 
A document $D$ is divided into a sequence of blocks of size $b$. If the 
document length is not a multiple of $b$-bits, a standard padding method is applied at its tail with a bit ``1'' followed by the sufficient 
number of consecutive bits ``0'' so that the last block is of size $b$. The final message is of the form $D=D_1\|D_2\|\ldots\|D_n$ with $D_i \in \{0,1\}^b$ for 
$i \in \llbracket 1,n \rrbracket$. Besides, a stronger assumption is made about the blocks of the document which have to be all distinct,
meaning that $D_i\neq D_j$ for $i, j \in \llbracket 1,n \rrbracket$ and $i \neq j$. The hash value $\mu$ for PCIHF is calculated as following:
\end{sloppypar}

$$\mu = \sum_{i=1}^{n-1} \mathcal{R}(D_i\|D_{i+1}) \mod{2^{160}}$$
where the \emph{randomize} operation $\mathcal{R}:\{0,1\}^{2b} \rightarrow \{0,1\}^{160}$ is a compression function, or a pseudo-random function. Note that 
the \break padding used is not secure, as shown with examples in \cite{pairchaining}. Besides, the authors of PCIHF advocate the use of a standard hash function for 
the \emph{randomize} operation and in this case the padding is not required. This is because in standard hash functions a padding is 
enforced whatever the input length is.

To ensure integrity of content a hash function must ensure first and second pre-image resistance as well as collision resistance. The security of a signature scheme based on 
the \emph{hash-then-sign} paradigm relies on the collision resistance property of the underlying hash function. A hash function is said to be collision resistant if it 
is hard to find two messages that hash to the same output value. The PCIHF hash function allows insert operation with a 
running time proportional to the number of blocks to insert, and replace/delete operations with constant cost. Besides, this one has been proved to be
collision resistant in the random oracle model \cite{DBLP:conf/indocrypt/GoiSC01}.

\paragraph*{\textbf{The design problem}}

Let us denote $(X,Y)$ a subsequence of blocks starting with $X$ and ending with $Y$, and $[X,Y]$ a pair of consecutive blocks. 
We recall here the problems raised in \cite{pairchaining} about pair-wise chaining used in hash functions. The authors found message patterns which lead to trivial collisions:
($i$) Palindromic messages of any block length, for instance the 3-block message $B\|A\|B$ which produces the same hash value than $A\|B\|A$;
($ii$) Certain non-palindromic messages. As said explicitly in \cite{pairchaining}, "any two messages with the same block at both ends, 
and where all consecutively paired blocks follow the same order, would cause collisions". The given examples are the messages $A\|B\|C\|B\|A$ 
and $B\|C\|B\|A\|B$ which produce the same hash value;
($iii$) In the case where the XOR operator is used, messages with repetitive blocks could also produce collisions. This is the case of the messages $A\|B\|B\|B\|C$ and
$A\|B\|C$ for which a pair number of pair-wise links $[B,B]$ does not change the hash value.
\begin{sloppypar}
Now, concerning the second case above we can observe the following two more general patterns leading to collisions: 
($i$)~If a message contains at least three identical blocks, for instance $B\|X\|C\|X\|D\|X\|E$, then we have the interesting sequence of
pair-wise links $[X,C]$, $[C,X]$, $[X,D]$ and $[D,X]$. We see that we can permute the subsequences ($[X,C]$, $[C,X]$) and ($[X,D]$, $[D,X]$) 
so that the message $B\|X\|D\|X\|C\|X\|E$ produces the same hash value;
($ii$)~If a message contains at least two pairs of identical blocks, for instance the message $B\|\underbrace{X\|C\|Y}\|D\|\underbrace{X\|E\|Y}\|F$ for which 
the two subsequences $(X,X)$ and $(Y,Y)$ overlap, a permutation gives us the message $B\|X\|E\|Y\|D\|X\|C\|Y\|F$ which produces the same hash value.
\end{sloppypar}

\subsection{Incremental signature schemes}

\begin{definition}
An incremental asymmetric signature scheme is specified by a 4-tuple of algorithms $\Pi=(\mathcal{G},\mathcal{S},\mathcal{I},\mathcal{V})$ in which:
\begin{itemize}
 \item $\mathcal{G}$, the key generation algorithm, is a probabilistic polynomial time algorithm
that takes as input a security parameter $k$ and returns a key pair $(sk,pk)$ where $sk$ is the private key and 
$pk$ the public key.
\item $\mathcal{S}$, the signature algorithm, is a probabilistic polynomial time algorithm
that takes as input $sk$ and a document $D \in \sum^+$ and returns the signature $s=\mathcal{S}_{sk}(D)$ where $s$ is a 
signature with appendix,  that is, $s$ is of the form $(D,s')$ where $D$ is the document and $s'$ the appendix.
\item $\mathcal{I}$, the incremental update algorithm, is a probabilistic polynomial time algorithm 
that takes as input a key $sk$, (a document $D$), a modification operation $M \in \mathcal{M}$, and the
signature $s$ (related to $D$) and returns the modified signature $s'$.
\item $\mathcal{V}$, the verification algorithm, is a deterministic polynomial time algorithm
that takes as input a public key $pk$ and a signature $s=\mathcal{S}_{sk}(D)$ and returns $1$ if the signature is valid, 
$0$ otherwise.
\end{itemize} 
\end{definition}

Considering a modified document $D'=D \langle M \rangle$, the desired behaviours of an incremental signature scheme are the followings:
($i$)~It is required that $\mathcal{V}_{pk}(\mathcal{I}_{sk}(\mathcal{S}_{sk}(D),M))=1$.
($ii$)~Optionally, an incremental signature scheme could be oblivious (or perfectly private) in the sense that the output of a signature $S_{sk}(D')$ is indistinguishable from
the ouput of an incremental update $\mathcal{I}_{sk}(\mathcal{S}_{sk}(D),M)$. This property is particularly useful if we want to hide the modification history
of a signed document, or even the fact that an update operation has been performed.


\subsection{Unforgeability}\label{Unf_game}

For incremental signature schemes, existential unforgeability measures the unability for an adversary to generate a new pair $(D^*, S^*)$ where:
($i$) $D^*$ is not a document that has been signed by the signing oracle;
($ii$) $D^*$ is not a modified document obtained via the incremental update oracle;
($iii$) $S^*$ is a signature on $D^*$.
More precisely, the notion of 
\emph{existential unforgeability} under a chosen message attack is defined using the following game between the adversary $\mathcal{A}$ and the challenger:
\begin{enumerate}
 \item The challenger runs algorithm $\mathcal{G}$ to obtain a public key $pk$ and a private key $sk$. The adversary $\mathcal{A}$ is given $pk$.
 \item $\mathcal{A}$ chooses and requests signatures (adaptively) on at most $q_s$ messages. Additionally, $\mathcal{A}$ chooses and requests 
at most $q_i$ valid incremental updates (adaptively) on signatures issued from the signing oracle (or from the updating oracle). The term ``valid`` means that 
the pair $(D,S)$ on which the update is requested satisfies $\Pi.\mathcal{V}_{pk}(D,S)=1$.  The challenger responds to each query.
 \item Eventually, $\mathcal{A}$ outputs a pair $(D^*,S^*)$ where $D^*$ is not a document signed by the signing oracle $\mathcal{O}^{S_{sk}}$ nor
an updated document whose the signature has been obtained from the incremental update oracle $\mathcal{O}^{\mathcal{I}_{sk}}$. We say that $\mathcal{A}$
succeeds if $\Pi.\mathcal{V}_{pk}(D^*,S^*)=1$.
\end{enumerate}

\begin{definition}
Let $\Pi = (\mathcal{G}, \mathcal{S}, \mathcal{I}, \mathcal{V})$ be an incremental signature scheme over modification space $\mathcal{M}$, and let $\mathcal{A}$ 
be an adversary. 
Let
$$Adv^{unf}_{\mathcal{A},\Pi} = Pr[sk \leftarrow G; S^* \leftarrow \mathcal{A}^{\mathcal{S}_{sk},\mathcal{I}_ {sk}} : \mathcal{V}_{pk}(S^*) = 1].$$

We say that $\Pi$ is $(t,q_s,q_{i},\epsilon)$-secure in the sense of \emph{existential unforgeability} if, for any adversary $\mathcal{A}$ which runs in time $t$, making
$q_s$ queries to the signing $\mathcal{O}^{S_{K'}}$ oracle and $q_{i}$ valid queries to the update $\mathcal{O}^{I_{K'}}$ oracle, 
$Adv^{unf}_{A,\Pi}$ is less than $\epsilon$.
\end{definition}


\section{The proposed incremental asymmetric signature scheme}\label{proposed_scheme}

We assume the use of a simpler version of the PCIHF hash function, denoted $H$, where input blocks of the random function are expressly unrelated. 
This incremental hash function has been proved to be set-collision resistant \cite{conf/asiacrypt/ClarkeDDGS03} in the random oracle model under the weighted knapsack assumption.
From there we will use it as a black-box primitive 
to design secure incremental signature schemes. First we format the message correctly by enforcing the padding in the following way:
if the message length is not a multiple of $2b$-bit, we padd the last block with a bit 1 followed by the sufficient number of bits 0, otherwise we add a new block
whose content is $1\{0\}^{2b-1}$. The final message is of the form $M=M_1\|M_2\|\ldots\|M_{n-1}$ with $M_i \in \{0,1\}^{2b}$ for 
$i \in \llbracket 1,n-1 \rrbracket$. The hash value is computed such that $\mu=H(D)$ where $H(.)$ is a function taking a string of size a multiple of $2b$-bit 
and returning a value in $\{0,1\}^{3200}$. This last one is defined as follows:
$$H(M) = \sum_{i=1}^{n-1} \mathcal{R}(M_i) \mod{2^{3200}}$$
where the randomize function $\mathcal{R}:\{0,1\}^{2b} \rightarrow \{0,1\}^{3200}$ has an output size of 3200 bits and the addition is performed modulo $2^{3200}$. 
Considering the problem of generating a second preimage, the output size of $\mathcal{R}$ in the original PCIHF hash function is no longer secure due to the efficient 
(subexponential) generalized birthday attack \cite{Wagner02ageneralized,Ber07}. 
This force us to increase the output size of $\mathcal{R}$ by considering for instance 3200 bits, a choice of parameter which allows us to guard against this attack by 
keeping an equivalent security of 112 bits.

Let us suppose a correctly padded $(n-1)$-block message $D=D_1\|D_2\|\ldots\|D_{n-1}$ with $D_i \in \{0,1\}^{b}$ for $i \in \llbracket 1,n-1 \rrbracket$. 
We assume the use of a digital signature algorithm $\Psi=(G,S,V)$ in which: ($i$) $G$ takes as input a security parameter and returns a pair of keys $(sk,pk)$; 
($ii$) $S$ is a \emph{ppt} signature algorithm taking as input the secret key $sk$, a document $D$ and returns a signature $s$;
($iii$) $V$ is a \emph{dpt} algorithm taking as input the public key $pk$, a document $D$, a signature $s$ and returns 1 if the signature is valid,
$\bot$ otherwise.

\begin{sloppypar}
Now we can describe the 4-tuple of algorithm $\mathrm{IncSIG}=(\mathcal{G},\mathcal{S},\mathcal{I},\mathcal{V})$. 
The key generation of the incremental algorithm is simply $\mathrm{IncSIG}.\mathcal{G}=\Psi.G$. 
The signature algorithm $\mathrm{IncSIG}.\mathcal{S}$ taking as input the secret key $sk$ and the document $D$ is the following:
\end{sloppypar}

~\\
\linethickness{3pt}
\centerline{\fbox{\begin{minipage}{\linewidth}
\vglue 2mm
\sf
\begin{enumerate}
 \item Pick uniformly at random $n$ blocks of size $b/2$ denoted $R_{1}$, $R_{2}$, ..., $R_{n}$;
 \item Compute the hash value 
\begin{eqnarray*}
 \mu = \sum_{i=1}^{n-1} \mathcal{R}(R_{i}\|R_{i+1}\|D_i) \mod{2^{3200}};
\end{eqnarray*}
 \item Let $l$ be the size of $D$, compute a signature $s=\Psi.S_{sk}(\mu\|l)$ and return the incremental signature $(R_{1}, R_{2}, \ldots, R_{n}, \mu, s)$.
\end{enumerate}
\vglue 2mm
\end{minipage}}
}
~\\

The verification algorithm $\mathrm{IncSIG}.\mathcal{V}$ taking as input the public key $pk$, the document $D$ and the signature $(R'_{1}, R'_{2}, \ldots, R'_{n}, \mu', s')$ 
is the following:

~\\
\linethickness{3pt}
\centerline{\fbox{\begin{minipage}{\linewidth}
\vglue 2mm
\sf
\begin{enumerate}
 \item Compute the hash value 
\begin{eqnarray*}
\mu = \sum_{i=1}^{n-1} \mathcal{R}(R'_{i}\|R'_{i+1}\|D_i) \mod{2^{3200}};
\end{eqnarray*}
 \item Let $l$ be the size of $D$, run the verification $b=\Psi.V_{pk}(s',\mu\|l)$ and return $b$.
\end{enumerate}
\vglue 2mm
\end{minipage}}
}
~\\

\begin{sloppypar}
We describe hereafter the incremental algorithm $\mathrm{IncSIG}.\mathcal{I}$ that takes as input the secret key $sk$, the document $D$, the signature $(R_{1}, R_{2}, \ldots, R_{n}, \mu, s)$ and 
an insert operation $M=(insert,i,\sigma)$ that changes $D$ in $D'$ where $\sigma$ is only one block (for the sake of simplicity):
\end{sloppypar}

~\\
\linethickness{3pt}
\centerline{\fbox{\begin{minipage}{\linewidth}
\vglue 2mm
\sf
\begin{enumerate}
 \item Draw a new random block of size $b/2$ denoted $R$;
 \item Compute the hash value 
\begin{displaymath}
   \begin{aligned}
 \mu' = \mu & -  \mathcal{R}(R_{i}\|R_{i+1}\|D_i))\\
& + \mathcal{R}(R_{i}\|R\|D_i)\\
& + \mathcal{R}(R\|R_{i+1}\|\sigma) \mod{2^{3200}};
   \end{aligned}
\end{displaymath}
 \item Let $l'$ be the size of $D'$, compute a signature $s'=\Psi.S_{sk}(\mu'\|l')$ and return the updated incremental 
signature $(R_{1}, R_{2}, \ldots, R_i, R, R_{i+1}, \ldots, R_{n}, \mu', s')$.
\end{enumerate}
\vglue 2mm
\end{minipage}}
}
~\\

We do not describe the other operations which can be deduced from the previous one. 


\paragraph*{Remark} We could change the verification algorithm and reject signatures with non distinct $R_i$'s. Checking that all the random 
blocks are distinct would simplify the security analysis, but there are several reasons for not doing this:
($i$)  This verification is a costly operation; ($ii$) The owner of the secret key is not considered as an adversary in the standard definition 
of existential unforgeability. The signing oracle is then implemented exactly as specified.


\section{A simple improvement IncSIG*}\label{improvment}

By considering the same previous notations, let us also define two fixed integers $(k,d)$ $\in {\mathbb{N}^*}^2$ such that $b=kd$ with $d \ge 2$.
We describe an improvement of $\mathrm{IncSIG}$ in which we use a $d$-wise chaining of random $k$-bit blocks. This parametrization will allow
a user to find a time/space efficiency trade-off without sacrifying security.

\begin{sloppypar}
Now we describe the 4-tuple of algorithm $\mathrm{IncSIG}^*=(\mathcal{G},\mathcal{S},\mathcal{I},\mathcal{V})$. 
The key generation of the incremental algorithm is simply $\mathrm{IncSIG}^*.\mathcal{G}=\Psi.G$. 
The signature algorithm $\mathrm{IncSIG^*}.\mathcal{V}$ taking as input the secret key $sk$ and the document $D$ is the following:
\end{sloppypar}

~\\
\linethickness{3pt}
\centerline{\fbox{\begin{minipage}{\linewidth}
\vglue 2mm
\sf
\begin{enumerate}
 \item Pick uniformly at random $n$ blocks of size $\frac{b}{d}$ denoted $R_{1}$, $R_{2}$, ..., $R_{n}$, $R_{n+1}$, ..., $R_{n+d-2}$;
 \item Compute the hash value 
\begin{eqnarray*}
 \hspace{-3 mm} \mu = \sum_{i=1}^{n-1} \mathcal{R}(R_{i}\|R_{i+1}\| \ldots \| R_{i+d-2} \| R_{i+d-1} \| D_i) \mod{2^{3200}};
\end{eqnarray*}
 \item Let $l$ be the size of $D$, compute a signature $s=\Psi.S_{sk}(\mu\|l)$ and return the incremental signature $(R_{1}, R_{2}, \ldots, R_{n+d-2}, \mu, s)$.
\end{enumerate}
\vglue 2mm
\end{minipage}}
}
~\\

The verification algorithm $\mathrm{IncSIG^*}.\mathcal{V}$ taking as input the public key $pk$, the document $D$ and the signature $(R'_{1}, R'_{2}, \ldots, R'_{n+d-2}, \mu', s')$ 
is the following:

~\\
\linethickness{3pt}
\centerline{\fbox{\begin{minipage}{\linewidth}
\vglue 2mm
\sf
\begin{enumerate}
 \item Compute the hash value 
\begin{eqnarray*}
\hspace{-3 mm} \mu = \sum_{i=1}^{n-1} \mathcal{R}(R'_{i}\|R'_{i+1}\| \ldots \| R'_{i+d-2} \| R'_{i+d-1} \| D_i) \mod{2^{3200}};
\end{eqnarray*}
 \item Let $l$ be the size of $D$, run the verification $b=\Psi.V_{pk}(s',\mu\|l)$ and return $b$.
\end{enumerate}
\vglue 2mm
\end{minipage}}
}
~\\

We describe hereafter the incremental algorithm $\mathrm{IncSIG^*}.\mathcal{I}$ that takes as input the secret key $sk$, the document $D$, the signature 
$(R_{1}, R_{2}, \ldots, R_{n+d-2}, \mu, s)$ and 
an insert operation $M=(insert,i,\sigma)$ that changes $D$ in $D'$ where $\sigma$ is only one block (for the sake of simplicity):

~\\
\linethickness{3pt}
\centerline{\fbox{\begin{minipage}{\linewidth}
\vglue 2mm
\sf
\begin{enumerate}
 \item Draw a new random block of size $\frac{b}{d}$ denoted $R$;
 \item Compute the hash value 
\begin{displaymath}
   \begin{aligned}
 \mu' & =  \mu - \sum_{j=i-d+2}^{i} \mathcal{R}(R_{j}\|R_{j+1}\| \ldots \| R_{j+d-1} \| D_j)\\
& + \mathcal{R}(R_{i-d+2} \| \ldots \| R_{i} \| R \| D_{i-d+2})\\
& + \mathcal{R}(R_{i-d+3} \| \ldots \| R_{i} \| R \| R_{i+1} \| D_{i-d+3})\\
& + \sum_{j=i-d+4}^{i-1} \mathcal{R}(R_{j}\| \ldots \| R_i \| R\| R_{i+1} \| \ldots \| R_{j+d-2} \| D_j)\\
& + \mathcal{R}(R_i \| R\|R_{i+1}\| \ldots \| R_{i+d-2} \| D_i)\\
& + \mathcal{R}(R\|R_{i+1}\| \ldots \| R_{i+d-1} \| \sigma) \mod{2^{3200}};
   \end{aligned}
\end{displaymath}
 \item Let $l'$ be the size of $D'$, compute a signature $s'=\Psi.S_{sk}(\mu'\|l')$ and return the updated incremental 
signature $(R_{1}, \ldots, R_i, R, R_{i+1}, \ldots, R_{n+d-2}, \mu', s')$.
\end{enumerate}
\vglue 2mm
\end{minipage}}
}
~\\

Let us now describe this incremental algorithm in the case of
a replace operation $M=(replace,i,\sigma)$ where $\sigma$ is only one block (for the sake of simplicity):

~\\
\linethickness{3pt}
\centerline{\fbox{\begin{minipage}{\linewidth}
\vglue 2mm
\sf
\begin{enumerate}
 \item Draw a new random block of size $\frac{b}{d}$ denoted $R$;
 \item Compute the hash value 
\begin{displaymath}
   \begin{aligned}
 \mu' = \mu & - \mathcal{R}(R_{i} \| \ldots \| R_{i+d-1} \| D_{i})\\
& + \mathcal{R}(R\|R_{i+1}\| \ldots \| R_{i+d-1} \| \sigma) \mod{2^{3200}};
   \end{aligned}
\end{displaymath}
 \item Let $l'$ be the size of $D'$, compute a signature $s'=\Psi.S_{sk}(\mu'\|l')$ and return the updated incremental 
signature $(R_{1}, \ldots, R_i, R, R_{i+1}, \ldots, R_{n+d-2}, \mu', s')$.
\end{enumerate}
\vglue 2mm
\end{minipage}}
}
~\\

When performing a deletion operation, in order to maintain consistency in the $d$-wise chain, the contribution of $d$ hash values of the non updated chain 
has to be deducted from $\mu$ while $d-1$ new values have to be added. This algorithm can be deduced from the insertion one.

\paragraph*{Remark} We could change the verification algorithm and reject signatures with non distinct $(d-1)$-tuples of random blocks. Checking that all these $(d-1)$-tuples 
are distinct would simplify the security analysis, but there are several reasons for not doing this:
($i$)  This verification is a costly operation; ($ii$) The signer is not considered as an adversary in the standard definition of existential unforgeability.

\section{Security analysis}\label{security}

The property of obliviousness of the schemes $\mathrm{IncSIG}$ and $\mathrm{IncSIG^*}$ is obvious and we focus only on the security analysis of unforgeability.

\begin{sloppypar}
\begin{theorem}
Let suppose that $\mathcal{F}$ is a $(t,q_i,q_s,\epsilon)$-forger against our incremental signature scheme $\mathrm{IncSIG}$, then there exists a $(t',q_h,\epsilon')$-collision finder 
$\mathcal{CF}$ against the underlying hash function $H$ and a $(t'',q_s+q_i,\epsilon'')$-forger $\mathcal{F'}$ against the underlying signature scheme $\Psi$, 
where the quantities are related by
$$\epsilon \le \frac{q^2-q}{2^{b/2+1}} + \epsilon' + \epsilon'';\ t \ge max\{t'- (q_s+q_i)t_{sign},t'' \}-q_ht_{op},$$ 
where $q=q_s(n_{max}+1)+q_i$, $q_h=q_sn_{max}+3q_i$, and $t_{sign}$ and $t_{op}$ are, respectively, the maximum running times to perform a signature with $\Psi$ and 
an operation in $\left(\mathbb{F}_{2^{3200}},+\right)$.
\end{theorem}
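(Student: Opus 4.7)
The plan is to reduce the forgery game for $\mathrm{IncSIG}$ to two primitives simultaneously via a case split, together with a birthday bound on an auxiliary bad event. Concretely, I would simulate both oracles $\mathcal{O}^{\mathcal{S}_{sk}}$ and $\mathcal{O}^{\mathcal{I}_{sk}}$ honestly (drawing fresh $b/2$-bit $R$'s and summing the outputs of $\mathcal{R}$), so that the simulator's view is identical to the real game; the only object the reduction does not generate internally is the underlying $\Psi$-signature, which is obtained either from a $\Psi$-signing oracle or treated as the output the collision finder tries to exploit.

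First, introduce the bad event $E_1$ that two of the $q = q_s(n_{\max}{+}1) + q_i$ honestly drawn $b/2$-bit random blocks coincide. A standard birthday argument gives $\Pr[E_1] \le \binom{q}{2}/2^{b/2} = (q^2-q)/2^{b/2+1}$, which explains the leading term in the bound. Conditioned on $\neg E_1$ and on the forgery event, I split on whether the pair $(\mu^*,l^*)$ extracted from the forgery was ever submitted by the simulator to its internal $\Psi$-signing procedure.

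If $(\mu^*,l^*)$ is \emph{new}, the pair $(\mu^*\|l^*, s^*)$ is a valid $\Psi$-signature on a message the simulator never signed, so the reduction $\mathcal{F}'$ inherits $\mathcal{F}$'s advantage as a $\Psi$-forger making at most $q_s+q_i$ signing queries — bounded by $\epsilon''$. Otherwise, there is some earlier query $j$ with $(\mu^{(j)},l^{(j)}) = (\mu^*, l^*)$; equal lengths force equal block counts, and since $\mathcal{F}$'s forgery is on a document $D^* \neq D^{(j)}$, the two tuples of triples $(R^*_i\|R^*_{i+1}\|D^*_i)_i$ and $(R^{(j)}_i\|R^{(j)}_{i+1}\|D^{(j)}_i)_i$ yield the same value of the sum defining $H$. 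This is exactly the input the collision finder $\mathcal{CF}$ outputs; it makes $q_h = q_s n_{\max} + 3q_i$ queries to $\mathcal{R}$ (to maintain the running hashes during signing and during each update), bounded by $\epsilon'$. A union bound over the three events yields $\epsilon \le (q^2-q)/2^{b/2+1} + \epsilon' + \epsilon''$; the time accounting comes from charging $\mathcal{F}'$ with an additional $(q_s+q_i)t_{\mathrm{sign}}$ (it does not have its own $\Psi$-oracle internally to simulate) and both reductions with $q_h t_{\mathrm{op}}$ for arithmetic in $(\mathbb{F}_{2^{3200}},+)$.

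The delicate point — and the one I would treat most carefully — is verifying that under $\neg E_1$ the equality in the second case is a genuine set-collision for $H$, not a trivial multiset relation. On the honest side, distinctness of the $R^{(j)}_i$'s forces the triples $R^{(j)}_i\|R^{(j)}_{i+1}\|D^{(j)}_i$ to be pairwise distinct (they already differ in the first $b/2$ bits). On the adversarial side, the chaining constraint that the second half of the $i$-th triple coincides with the first half of the $(i{+}1)$-st, combined again with distinctness of the honest random blocks, pins down the alignment: if the adversarial multiset coincided with the honest set, induction on $i$ would force $R^*_i = R^{(j)}_i$ and $D^*_i = D^{(j)}_i$ for all $i$, contradicting $D^* \neq D^{(j)}$. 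Hence the two tuples give a non-trivial relation involving at least one distinct input on the honest side, i.e.\ a legitimate set-collision for $H$ as required by its security hypothesis.
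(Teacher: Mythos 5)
Your proposal is correct and follows essentially the same route as the paper: the same case split on whether $\mu^*$ was already produced by the simulator (yielding a $\Psi$-forger in one branch and a set-collision finder for $H$ in the other), combined with the same birthday bound on the event that two of the $q = q_s(n_{\max}+1)+q_i$ random half-blocks collide, which is exactly the paper's event $\mathrm{AD}$. Your explicit induction showing that distinctness of the $R_i$'s pins down the block alignment is a more carefully worked-out version of the paper's remark that under $\mathrm{AD}$ the blocks cannot be permuted without changing the inputs to $\mathcal{R}$.
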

\end{sloppypar}


\begin{proof}
For the sake of simplicity in the sketch of proof, we only deal with \textit{insert} operations of one block and suppose that a document can have 
a maximum length of $n_{max}$ blocks. 
Let us suppose that there exists a $(t,q_i,q_s,\epsilon)$-forger $\mathcal{F}$ against 
our incremental signature scheme $\mathrm{IncSIG}$.
By interacting adaptively with the challenger in the game defined Section \ref{Unf_game}, the forger eventually outputs a pair
$(D^*,S^*)$. The forger can win the game according to both following possibilities:
\begin{itemize}
 \item case 1: The value $\mu^*$ contained in the successful forgery has already been retrieved in a response to a query.
 \item case 2: The value $\mu^*$ contained in the successful forgery has never been encountered in the responses to the queries.
\end{itemize}

By assuming a forger against our incremental signature scheme $\mathrm{IncSIG}$, we show that the case 1 allows the construction of a collision finder $\mathcal{CF}$ for the hash function $H$ and
the case 2 a forger $\mathcal{F}'$ against the signature scheme $\Psi$. Indeed, we build a collision finder $\mathcal{CF}$ for $H$ and a forger $\mathcal{F}'$ for the underlying 
signature algorithm in the following way:
\begin{itemize}
\begin{sloppypar}
 \item case 1: The collision finder $\mathcal{CF}$ uses $\mathcal{F}$ as a subroutine and simulates an incremental signing oracle as follows: first of all, $\mathcal{CF}$ executes $\Psi.G$ to obtain a 
pair of keys $(sk,pk)$ and conveys $pk$ to the forger $\mathcal{F}$. Whenever $\mathcal{F}$ queries a signature for a document, 
$\mathcal{CF}$ forms correctly the document with an enforced padding to obtain a $n$-block document $D=D_1\|D_2\| \ldots \| D_{n-1}$, then he (she) generates $n$ random blocks 
$R_1,R_2, \ldots, R_n$ and computes 
\begin{eqnarray*}
\mu & = & \sum_{i=1}^{n-1} \mathcal{R}(R_{i}\|R_{i+1}\|D_{i}) \mod{2^{3200}}
\end{eqnarray*}
by resorting to the random oracle for the function $\mathcal{R}$.
After that, he (she) computes a signature $s$ on $\mu$ using the secret key $sk$ and responds to $\mathcal{F}$ with the incremental signature
$S=(R_{1}, R_{2}, \ldots, R_{n}, \mu, s)$. The updating oracle is simulated as follows: whenever $\mathcal{F}$ queries an insertion of one block right after index $i$ in a signed document $(D,R_{1}, R_{2}, \ldots, R_{n}, \mu, s)$,
$\mathcal{CF}$ generates a new random block $R$ and updates the value of $\mu$ (random oracle accesses to $\mathcal{R}$ are needed) 
by first removing the contribution of the pair-wise link $[R_i,R_{i+1}]$ and then adding the contributions of the two new links $[R_i,R]$ and $[R,R_{i+1}]$.
Then he (she) computes a signature $s'$ on the updated hash $\mu'$ and sends the updated incremental signature 
$S'=(R_{1}, R_{2}, \ldots, R_i, R, R_{i+1}, \ldots, R_{n}, \mu', s')$ to $\mathcal{F}$. At the end, $\mathcal{F}$
comes with a new pair $(D^*,S^*)$ where $S^*~=~(R_1^*, \ldots, R_m^*, \mu^*, s^*)$ and $D^*=D_1^* \| \ldots \| D_{m-1}^*$. As we supposed that $s^*$ was obtained from the signing 
oracle $\mathcal{O}^{\Psi.S_{sk}}$, the set $\{R_1^* \| R_2^* \| D_1^*, \ldots , R_{m-1}^* \| R_{m}^* \| D_{m-1}^*\}$ corresponds to a set-collision for the hash function $H$
if a certain condition is fulfilled: the document corresponding to the forged signature is not simply a reordering of the blocks of a queried document.

Let us denote by $q$ the total number of random blocks used as input to the random function such that $q=q_s(n_{max}+1)+q_i$. 
We consider the list $L$ of these $q$ blocks reindexed for the occasion so that $L=(R'_i)_{i=1 \ldots q}$.
Let $\mathbf{AD}$ be the event that the blocks of $L$ are distinct, or in other words, $R'_i\neq R'_j$ for all $i, j \in \llbracket 1,q \rrbracket$ with $i \neq j$. 

When the event $\mathbf{AD}$ occurs the $b$-bit blocks of a signed message can not be permuted without re-evaluating $\mathcal{R}$ to 
the new appearing points, changing the final value of $H$. 
Let us denote simply by \textbf{$\mathrm{S}_{\mathrm{case 1}}$} the event of
success of $\mathcal{F}$ in the current case that we are describing. Then,
\begin{eqnarray*}
\Pr(\mathrm{S}_{\mathrm{case 1}}) & = & \Pr(\mathrm{S}_{\mathrm{case 1}} | \overline{\mathrm{AD}} )\Pr(\overline{\mathrm{AD}}) 
+ \Pr(\mathrm{S}_{\mathrm{case 1}} | \mathrm{AD})\Pr(\mathrm{AD})\\
      & \le & \Pr(\overline{\mathrm{AD}}) + \Pr(\mathrm{S}_{\mathrm{case 1}} | \mathrm{AD})\\
      & \le & \frac{q^2-q}{2^{b/2+1}} + Adv^{H}_ {scr}
\end{eqnarray*}
where $Adv^{H}_ {scr}$ is the advantage of $\mathcal{CF}$ for breaking the hash function $H$ in the sense of set-collision resistance.
 \item case 2: The forger $\mathcal{F'}$ uses $\mathcal{F}$ as a subroutine and simulates its environnement as done by $\mathcal{CF}$ in the case 1 
except for the following differences:
$\mathcal{F'}$ simulates the random oracle for $\mathcal{R}$ by generating on the fly a table mapping input values to random output strings and uses its own signing oracle to 
obtain a signature on a hash value $\mu$.
Eventually, $\mathcal{F}$ comes with a new pair $(D^*,S^*)$ and we have supposed that $s^*$ has not been obtained from the signing oracle, the output
of $\mathcal{F'}$ is setted to $(\mu^*,s^*)$ which corresponds to a valid forgery.
\end{sloppypar}
\end{itemize}
\end{proof}

Now we can focus on the security of $\mathbf{\mathrm{IncSIG^*}}$. A $d$-wise chain staying a $d$-wise chain after an update, the obliviousness property is obvious.
Then we give only the interesting details of the proof of unforgeability concerning the following theorem since this one is very similar to the above. 

\begin{theorem}
\begin{sloppypar}
Let suppose that $\mathcal{F}$ is a $(t,q_i,q_s,\epsilon)$-forger against our incremental signature scheme $\mathrm{IncSIG^*}$, then there exists a $(t',q_h,\epsilon')$-collision finder 
$\mathcal{CF}$ against the hash function $H$ and a $(t'',q_s+q_i,\epsilon'')$-forger $\mathcal{F'}$ against the underlying signature scheme $\Psi$, 
where the quantities are related by
$$\epsilon \le \frac{(q_s+q_i)(n_{max}+1)^2}{2^{(d-1)b/2+1}} + \epsilon' + \epsilon'';$$
$$t \ge max\{t'- (q_s+q_i)t_{sign},t'' \}-q_ht_{op},$$ 
where $q_h=q_sn_{max}+(2d-1)q_i$, and $t_{sign}$ and $t_{op}$ are, respectively, the maximum running times to produce a signature 
with $\Psi$ and to perform an operation in $\left(\mathbb{F}_{2^{3200}},+\right)$.
\end{sloppypar}
\end{theorem}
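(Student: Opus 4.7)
The plan is to reuse the dichotomy of the proof of Theorem 1 almost verbatim, adjusted for the fact that a document block $D_i$ is now tagged by a $d$-tuple of random blocks rather than a pair. I would let $\mathcal{F}$ forge $(D^*,S^*)$ where $S^*=(R_1^*,\ldots,R_{m+d-2}^*,\mu^*,s^*)$ and split on whether $\mu^*$ already appears in the transcript of responses (Case 1) or is new (Case 2). Case 2 is handled exactly as before: the reduction $\mathcal{F}'$ simulates the random oracle $\mathcal{R}$ by lazy sampling, forwards every $\mu\|l$ it needs signed to its own oracle, and outputs $(\mu^*\|l^*,s^*)$ as a fresh forgery against $\Psi$. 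The running time analysis and advantage contribution $\epsilon''$ are unchanged.

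For Case 1, I would construct $\mathcal{CF}$ in the same manner as in the proof of Theorem~1: it runs $\Psi.G$ itself, answers signature queries by sampling the $R_i$'s of size $b/d$, invoking the oracle for $\mathcal{R}$ at the $d$-wise-chained inputs, and signing $\mu\|l$; it answers update queries using the formula in $\mathrm{IncSIG}^*.\mathcal{I}$ which only requires $2d-1$ fresh random-oracle calls per insertion. When $\mathcal{F}$ returns a forgery whose $\mu^*$ matches some earlier response, the multiset of $\mathcal{R}$-inputs underlying $D^*$ either differs from the corresponding queried multiset, in which case $\mathcal{CF}$ has obtained a set-collision for $H$ and wins, or coincides with it. In the latter situation, $D^*\neq D$ forces the adversary to have reordered the sequence of $D_i^*$'s so as to preserve the multiset of $(d$-block, message-block) inputs to $\mathcal{R}$.

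The key new step is to argue that such a nontrivial reordering is impossible once a suitable distinctness event $\mathbf{AD}$ on the random blocks holds. I would define $\mathbf{AD}$ to be the event that, inside every document ever held by the simulator, no two distinct positions $i\neq j$ share the same $d$-tuple $(R_i,R_{i+1},\ldots,R_{i+d-1})$ of consecutive chaining blocks. Under $\mathbf{AD}$ each block $D_i$ is uniquely identified by its chaining context, so preserving the multiset of $\mathcal{R}$-inputs forces $D^*=D$, contradicting $D^*$ being a fresh document. Thus $\Pr[\mathrm{S}_{\mathrm{case\,1}}\mid \mathbf{AD}]\le Adv^H_{scr}$ and $\Pr[\mathrm{S}_{\mathrm{case\,1}}]\le \Pr[\overline{\mathbf{AD}}]+\epsilon'$, in the same fashion as in Theorem~1.

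The main obstacle is bounding $\Pr[\overline{\mathbf{AD}}]$, which is where the statement's exponent $(d-1)b/2$ enters and requires some care because updates reuse random blocks across successive document states. I would union-bound over the at most $q_s+q_i$ distinct documents ever maintained by the simulator and, inside each, over the at most $\binom{n_{max}+1}{2}$ unordered pairs of positions, using that each comparison imposes enough independent uniform constraints on the fresh $b/d$-bit blocks to yield a collision probability at most $2^{-(d-1)b/2}$. Combining the two cases and adding the simulation overhead ($q_h$ calls to $\mathcal{R}$ and $q_s+q_i$ applications of $\Psi.S$) yields the advertised bounds on $\epsilon$ and $t$. The delicate part is exactly the bookkeeping of inherited randomness across update queries, which is what forces the factor $(n_{max}+1)^2$ rather than a term that would grow with the total number of chaining positions over the whole experiment.
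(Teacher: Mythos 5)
Your overall architecture matches the paper's: the same Case 1 / Case 2 split on whether $\mu^*$ was already issued, the same reduction to a forger against $\Psi$ in Case 2, and in Case 1 the same strategy of a per-document distinctness event on the chaining blocks, union-bounded over the $q_s+q_i$ documents and the roughly $(n_{max}+1)^2/2$ pairs of positions, which is exactly how the paper arrives at the factor $(q_s+q_i)(n_{max}+1)^2$ and the stated $q_h$.

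There is, however, a genuine gap in your definition of the distinctness event. You require that no two positions share the same $d$-tuple $(R_i,\ldots,R_{i+d-1})$, and you claim that this pins each $\mathcal{R}$-input to its position so that preserving the multiset of inputs forces $D^*=D$. That implication fails: the adversary supplies its own chaining blocks in the forgery, so what it needs is not to reuse a position's full $d$-window but merely to re-chain the queried inputs $\mathcal{R}(R_i\|\cdots\|R_{i+d-1}\|D_i)$ in a new order, and placing input $j$ immediately before input $i+1$ only requires the $(d-1)$-block overlap $R_{j+1}\|\cdots\|R_{j+d-1}$ to equal $R_{i+1}\|\cdots\|R_{i+d-1}$. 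All $d$-windows can be pairwise distinct while two of these $(d-1)$-windows coincide, in which case a nontrivial reordering with the same multiset of $\mathcal{R}$-inputs exists and your bound on the Case 1 success probability conditioned on $\mathbf{AD}$ breaks. The paper's event $\mathbf{AD'}$ is accordingly defined on the $n+1$ $(d-1)$-tuples of consecutive random blocks in each document (this equality of $(d-1)$-tuples is precisely the ``prerequisite'' it isolates before stating the bound), and it is the $(d-1)$-tuple collision probability that produces the exponent $(d-1)b/2$ appearing in the theorem; indeed your own per-pair probability estimate already implicitly refers to $d-1$ blocks, so your event and your bound are inconsistent with each other. Replacing $d$-tuples by $(d-1)$-tuples of consecutive $R_i$'s repairs the argument and brings it in line with the paper.
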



\begin{proof}
\begin{sloppypar}
Keeping the previous notations,
first notice that in order for the adversary to permute two blocks of a message, say $R_{i}\|R_{i+1}\| \ldots \| R_{i+d-2} \| R_{i+d-1} \| D_i$ 
and $R_{j}\|R_{j+1}\| \ldots \| R_{j+d-2} \| R_{j+d-1} \| D_j$ with $j > i$, there are two possibilities:
($i$) The message is of length two blocks. In this case, $R_{1}\| \ldots \| R_{1+d-3} \| R_{1+d-2}$ must be equal to $R_{2}\| \ldots \| R_{i+d-2} \| R_{i+d-1}$;
($ii$) The message has a length greater than two blocks. In this case, 
having $R_{j+1}\| \ldots \| R_{j+d-2} \| R_{j+d-1}$ equal to $R_{i+1}\| \ldots \| R_{i+d-2} \| R_{i+d-1}$ is a prerequisite. Without this we can not ensure the 
consistency with the $(i+1)$\emph{-th} $d$-tuple in input to $\mathcal{R}$.
\end{sloppypar}
Continuing, it remains to notice that a message of length $n$ contains $n+1$ $(d-1)$-tuples of random blocks.
For each signature or update query providing a signature for a document $D^i$ we consider the list $L^i$ of the $(d-1)$-tuples involved in the produced signature. 
Let $\mathbf{AD'}$ be the event that the elements of $L^i$ are distinct for all $i \in \llbracket1,q_s+q_i\rrbracket$. It follows that:
$$\Pr(\overline{\mathrm{AD'}}) \le \frac{(q_s+q_i)(n_{max}+1)^2}{2^{(d-1)b/2+1}}.$$
\end{proof}

\section{Efficiency}\label{efficiency}

As we can see in Table \ref{tab:IncSIG_Efficiency}, whatever the parametrization used the cost to make a signature does not change. The operations in 
$\mathbb{Z}/2 ^{3200}\mathbb{Z}$ are much more expensive than the hash operations. Therefore, a solution to decrease the number of arithmetic 
operations could be to use larger blocks for the message at the counterpart of less efficient updates. Obviously, such a choice is not interesting 
in incremental cryptography, for which we would prefer to suffer higher costs for the signature generation and perform efficient updates.

When $d$ increases exponentially (and $k$ decreases exponentially), the size of the signature decreases in the same way. For instance, by choosing the 
triple $(b,k,d)=(256,1,256)$ the overhead for the signature size is only $n+255$ bits, that is to say, an overhead of 
approximately $\frac{1}{256}$\emph{-th} the size of the message. On the other hand we notice in Table \ref{tab:IncSIG_Update_Efficiency} that this is accompanied by 
larger update costs, showing that this is a question of compromise.

\begin{table}[!t]
    \centering
    \begin{tabular}{ | p{1.89cm} | p{1.70cm} | p{1.75cm} | p{1.5cm} | }
    \hline
      \centering{Parametrization} &  \centering{Bit-size overhead} & \centering{Number of hash function evaluations} & 
\centering{Number of additions in $\mathbb{Z}/(2 ^{3200}\mathbb{Z})$} \tabularnewline \hline
          \centering{$(k,d)$} & \centering{$nk + (d-1)k$} & \centering{$n$} & \centering{$n-1$} \tabularnewline\hline
    \end{tabular}
    \caption{Efficiency of $\mathrm{IncSIG^*}$: Expansion size of a signature and computational cost of the signing algorithm for a message
of $n$ $b$-bit blocks.}\label{tab:IncSIG_Efficiency}
\end{table}

\begin{table}[!t]
    \centering
    \begin{tabular}{ | p{2.5cm} | p{2.22cm} | p{2.5cm} | }
    \hline
       \centering{Insertion of one block} & \centering{Replacement of one block} & 
\centering{Deletion of one block} \tabularnewline \hline
	  \centering{$(2d-1,d,d-1)$} & \centering{$(2,1,1)$} & \centering{$(2d-1,d-1,d)$}  \tabularnewline \hline
    \end{tabular}
    \caption{Efficiency of $\mathrm{IncSIG^*}$: Computational cost of an update when the parameter $d$ is even, described in the form $(n_h,n_a,n_s)$ where $n_h$ is 
the number of hash evals, $n_a$ and $n_s$ the number of additions and substractions respectively in $\mathbb{Z}/2 ^{3200}\mathbb{Z}$.
}\label{tab:IncSIG_Update_Efficiency}
\end{table}



\begin{sloppypar}
To effectively improve performances, the choice of the underlying primitives is of great importance.
Concerning the underlying randomize function, we need a hash function capable of generating outputs of size 3200 bits. 
The new standard SHA-3, Keccak~\cite{Bertoni09keccakspecifications}, allows the output size to be parametrized. Besides, it can be used with a hash tree mode in order to
increase the degree of parallelism. However, if we can process the input message in parallel, it will be interesting to do the same for the output. 
Nevertheless, Keccak is based on the sponge construction and consequently the blocks which appear in this variable-sized output can not 
be generated in parallel, that is why we could prefer to use a solution based on a counter mode or a GGM technique \cite{DBLP:journals/jacm/GoldreichGM86} to generate them.
A good choice could be the Skein hash function \cite{Ferguson09theskein} which proposes a hash tree mode and permits to generate the output string in parallel as well.
Concerning the underlying signature scheme $\Psi$, one can choose any signature scheme based on the \emph{hash-then-sign} paradigm or a 
signature scheme providing message recovery.
In this latter case, the hash value $\mu$ could be removed from the incremental 
signature since it can be retreived during the verification process.
\end{sloppypar}

\newpage

\section{Conclusion}\label{conclude}

In this paper, we have described a method to construct an incremental asymmetric signature scheme which ensures the perfect privacy property. 
We have shown that we can discard the stronger assumption done about the blocks of the message and still use securely and in a practical way 
an incremental hash function based on pair-wise chaining.
To the best of our knowledge this is the first incremental asymmetric signature whose the update algorithm has a linear time worst case complexity.

Besides, we have shown how we can reduce the size of the signature, 
but at the expense of greater number of hash operations and additions/substractions.
Such a signature scheme is interesting for many applications in which we have to authenticate a lot of documents that continously undergo modifications, 
this is the case of the virus protections, the authentication of files systems and databases. More generally, this is particularly welcome 
for ensuring efficiently a secure handling of files in cloud storage systems.


\bibliographystyle{abbrv}
\bibliography{asia247s-atighehchi}



\end{document}